\newtheorem{theorem}{Theorem}
\newtheorem{lemma}{Lemma}
\newtheorem{definition}{Definition}
\newcommand{\set}[1]{\left\{ #1 \right\}}
\newcommand{\norm}[1]{\left|\left|#1\right|\right|}
\newcommand{\floor}[1]{\left\lfloor #1 \right\rfloor}
\newcommand{\ip}[1]{\langle #1 \rangle}
\newcommand{\real}{\mathbb{R}}
\renewcommand{\natural}{\mathbb{N}}
\let\phi\varphi
\DeclareMathOperator{\poly}{poly}
\def\e#1{\emph{#1}}
\def\T{\mathbf{T}}
\def\P{\mathbf{P}}
\def\R{\mathbf{R}}
\def\A{\mathbf{A}}
\def\J{\mathbf{J}}
\def\O{\mathcal{O}}
\def\cvec#1{\mathbf{#1}}
\def\nw{\mathsf{NW}}
\def\pw{\mathsf{PW}}
\def\angs#1{\mathord{\langle{#1\rangle}}}
\def\nwpar#1{\nw\angs{#1}}
\def\pwpar#1{\pw\angs{#1}}
\def\Cleft{C_{\mathsf{lft}}}
\def\Cright{C_{\mathsf{rgt}}}
\def\Cmiddle{C_{\mathsf{mid}}}
\def\cleft{c_{\mathsf{lft}}}
\def\cright{c_{\mathsf{rgt}}}
\def\cmiddle{c_{\mathsf{mid}}}
\def\Ctop{C_{\mathsf{top}}}
\def\Cbottom{C_{\mathsf{bot}}}
\def\dtilde{\Tilde{d}}
\newcommand{\eat}[1]{}
\author[1]{\large Aviram Imber}
\author[2]{\large Jonas Israel}
\author[2,3]{\large Markus Brill}
\author[1]{\large Hadas Shachnai}
\author[1]{\large Benny Kimelfeld}
\affil[1]{\normalsize Technion -- Israel Institute of Technology, Haifa, Israel}
\affil[2]{\normalsize Research Group Efficient Algorithms, TU Berlin, Germany}
\affil[3]{\normalsize Department of Computer Science, University of Warwick, UK}
\date{}
\begin{document}

\title{Spatial Voting with Incomplete Voter Information}
\maketitle

\begin{abstract}
       We consider spatial voting where candidates are located in the Euclidean $d$-dimensional space, and each voter ranks candidates based on their distance from the voter's ideal point. We explore the case where information about the location of voters' ideal points is incomplete: for each dimension, we are given an interval of possible values. We study the computational complexity of finding the possible and necessary winners for positional scoring rules. Our results show that we retain tractable cases of the classic model where voters have partial-order preferences. Moreover, we show that there are positional scoring rules under which the possible-winner problem is intractable for partial orders, but tractable in the one-dimensional spatial setting. 
       We also consider approval voting in this setting. We show that for up to two dimensions, the necessary-winner problem is tractable, while the possible-winner problem is hard for any number of dimensions.
\end{abstract}

\section{Introduction}
In the spatial model of voting \cite{enelow1984spatial,Mill95a}, both candidates and voters are associated with points in the $d$-dimensional Euclidean space $\mathbb{R}^d$. It is assumed that the locations of candidates and voters correspond to their ``ideal points'' and that each voter's preferences over the candidates can be inferred from the Euclidean distance between the candidates' and the voter's ideal points. For example, the location of a candidate or voter in $\mathbb{R}^d$ could reflect the stance (or opinion) of the candidate or voter regarding $d$ different \textit{issues} that are relevant for the election. In the social choice literature, preferences with this structure are often referred to as ($d$-)\e{Euclidean preferences} \cite{bogomolnaia2007euclidean,elkind2022preference}.

We consider a setting where only partial information about the preferences of voters is available. In such a setting, the exact preference order of a voter is unknown but assumed to come from a known space of possible preference orders. Each combination of possible preference orders is a possible voting profile that may result in different sets of winners (given a fixed voting rule). Natural computational tasks that arise in such scenarios ask about the \e{possible winners}  (who win in at least one possible profile) and the \e{necessary winners} (who win in every possible profile)~\citep{DBLP:conf/ijcai/Lang20}. A prominent manifestation of this idea is the seminal framework of \citet{Konczak2005VotingPW}, in which voter preferences are specified as partial orders and a possible profile is obtained by extending each partial order into a total preference order. A thorough picture of the complexity of the possible and necessary winner problems has been established in a series of studies~\citep{DBLP:journals/jcss/BetzlerD10,DBLP:journals/jair/XiaC11,DBLP:journals/ipl/BaumeisterR12}. For example, under every positional scoring rule in the setting of partial orders, the necessary winners can be found in polynomial time, yet it is NP-complete to decide whether a candidate is a possible winner (assuming a regularity condition), except for the tractable cases of the plurality and veto rules.

\begin{table*}[t]
\centering
\scalebox{0.89}{
\begin{tabular}{lll}
\toprule
{\textbf{Problem}}\phantom{XX} & Positional scoring rules &  Approval voting\\ 
\midrule
$\nwpar{d}$ & P [Thm.~\ref{thm:nwd}] & P for $d\leq 2$ [Thm.~\ref{thm:nwdApp}]  \\ 
\addlinespace[.8em]
$\pwpar{1}$ & \makecell[cl]{P for all two-valued rules [Thm.~\ref{thm:pw1Approval} and~\ref{thm:pwdPluVeto}], weighted veto [Thm.~\ref{res:pw1WeightedVeto}], \\ and $F(k,t)$ whenever $k > t$ [Thm.~\ref{res:pw1ThreeValued}]} & NP-c [Thm.~\ref{res:pw1App}]  \\
\addlinespace[.8em]
$\pwpar{d}$ & \makecell[cl]{P for plurality and veto [Thm.~\ref{thm:pwdPluVeto}];\\ 
NP-c for $k$-approval with $d \geq 2$ and $k \geq 3$ [Thm.~\ref{thm:pwdApproval}]} &  NP-c [Thm.~\ref{res:pw1App}] \phantom{XX}\\
\bottomrule
\end{tabular}
}
\caption{\label{tab:complexity} Complexity results for computing the necessary and possible winners in the studied models of uncertainty.}
\end{table*}

We study the complexity of the computational problems $\pwpar{d}$ and $\nwpar{d}$, where the goal is to find the possible and necessary winners, respectively, when we have incomplete information about voters' ideal points in spatial voting with $d$ dimensions. More precisely, instead of the ideal points,
we are given\,---\,for each voter and dimension\,---\,an interval of possible values for the voter's opinion. Hence, each voter is associated with a space of possible ideal points. 
Different points in this space may induce different preference orders over the candidates (whose locations are assumed to be known precisely). We thus get a mechanism for defining a space of possible total orders that is different\footnote{In \Cref{sec:model}, we show that the two settings are incomparable.} from the classical partial-order setting~\cite{Konczak2005VotingPW}. We refer to our setting as \e{partial spatial voting}. 

We first focus on the class of positional scoring rules and compare the computational complexity of the possible and necessary winner problems to the classic model of partial orders. We investigate the following questions:
{(1)}~Is the necessary-winner problem still tractable for all positional scoring rules? 
{(2)}~Is the possible-winner problem still tractable for plurality and veto? 
{(3)}~Are there positional scoring rules where the possible-winner problem is tractable for partial spatial voting but not for partial orders? We answer all three questions positively.
For some of our results, we uncover and exploit an interesting relationship between the possible-winner problem and \textit{scheduling}.  

We then consider spatial \textit{approval} voting with incomplete information. We provide an efficient algorithm for computing necessary winners in one and two dimensions and prove that computing possible winners is intractable, even for one dimension. 
Our results are summarized in \Cref{tab:complexity}.

\paragraph{Related work.}
Spatial voting in one dimension is intuitively similar to assuming \e{single-peaked preferences} \cite{Blac48a,Arro51a}. 
Yet, there are considerable differences, as single-peaked preferences do not impose any restrictions on the comparison between candidates on different sides of the peak. \citet{DBLP:conf/aaai/Walsh07} showed hardness results for possible and necessary winner questions 
under single-peaked (but not necessarily 1-Euclidean) preferences for STV and polynomial-time results for Condorcet-consistent rules.

\citet{bogomolnaia2007euclidean} showed that \e{every} (complete) preference profile can be represented in the spatial model, by choosing the dimension $d$ to be sufficiently large. 
Given a preference profile, it can be efficiently decided whether the profile can be represented as a one-dimensional spatial model~\cite{doignon1994polynomial,Knob10a,elkind2014recognizing}; for higher dimensions, the problem becomes intractable~\cite{peters2017recognising}. \citet{jamieson2011active} studied the problem of learning the ranking of candidates in spatial voting using pairwise comparisons. They established a bound on the number of possible rankings; we use this bound in \Cref{sec:necessaryWinners}. \citet{BGL+13a} and \citet{DBLP:conf/aaai/ImberIBK22} study the possible-winner and necessary-winner problems for approval voting in single-winner and multi-winner elections, respectively, using preference models similar to partial orders.

The problems considered in this paper also relate to manipulation and control problems that involve reasoning about a space of possibilities of profiles. \citet{DBLP:conf/atal/LuZROV19} study control where a party can select a subset of issues to focus on. \citet{DBLP:conf/uai/EstornellDEV20} study manipulation of spatial voting where the issues are weighted and a malicious attacker can change the weights. 
\citet{DBLP:conf/ijcai/0001EKV22} study manipulation where the adversary can change the position of a candidate.

\section{Preliminaries}\label{sec:preliminaries}

We first introduce the basic concepts and notation that we use throughout the paper. 

\subsection{Voting Profiles}
Let $C = \set{c_1, \dots, c_m}$ be a set of $m \ge 2$ \e{candidates} and $V = \set{v_1, \dots, v_n}$ a set of \e{voters}. A \e{ranking profile} 
$\R = (R_1, \dots, R_n)$ consists of $n$ linear orders over~$C$. Each $R_i$ represents the preference order of voter $v_i$. 
In the \e{spatial voting} setting~\citep{enelow1984spatial}, each candidate is associated with a $d$-dimensional vector corresponding to its positions (opinions) on $d$ issues, denoted by $\cvec{c}_i = \ip{c_{i, 1}, \dots, c_{i, d}} \in \real^d$.
For simplicity, we assume that all candidates have distinct positions, that is, there are no perfect clones.

A \e{spatial voting profile} $\T = (T_1, \dots, T_n)$ consists of a vector $T_j = \ip{T_{j, 1}, \dots, T_{j, d}}$ for each voter $v_j$, representing the voter's positions on the $d$ issues. Given a spatial profile $\T$, we can construct a ranking profile $\R_\T = (R_{T_1}, \dots, R_{T_n})$ where each voter $v_j$ ranks candidates in~$C$ according to their Euclidean distance $\norm{T_j-\cvec{c}_i}_2$ from~$v_j$. The closest candidate is ranked first, and the farthest is ranked last (position $m$) in $v_j$'s preferences. We break ties by a linear order over the candidates, which is given as part of the input for each voter. An example of a spatial voting profile and its associated ranking profile is illustrated in \Cref{fig:spatialVoting}. By a slight abuse of terminology, we may identify voters with their points in $\real^d$, and we use the terms \e{dimension} and \e{issue} interchangeably.

\begin{figure}
  \centering
  \scalebox{1}{\input{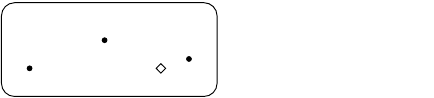_t}}
  \caption{Example of a spatial voting profile with $d=2$: a set $C = \set{c_1, c_2, c_3}$ of candidates and a single voter $v$. \label{fig:spatialVoting}}
\end{figure}

\subsection{Voting Rules}\label{sec:votingRules}
A \e{voting rule} is a function that maps each ranking profile to a set of \e{winners}. A \e{positional scoring rule} $r$ is a series $\set{ \Vec{s}_m }_{m\ge 2}$ of $m$-dimensional score vectors $\Vec{s}_m = (\Vec{s}_m(1), \dots, \Vec{s}_m(m))$ of natural numbers, where $\Vec{s}_m(1) \geq \dots \geq \Vec{s}_m(m)$ and $\Vec{s}_m(1) > \Vec{s}_m(m)$. We assume that $\Vec{s}_m(j)$ is computable in polynomial time in $m$, and the scores in each $\Vec{s}_m$ are {mutually} co-prime (i.e., their greatest common divisor is one). Some examples of positional scoring rules include the \e{plurality} rule $(1, 0, \dots, 0)$, the \e{$k$-approval} rule $(1, \dots, 1, 0, \dots, 0)$ that begins with $k$ ones, the \e{veto} rule $(1, \dots, 1, 0)$, and the \e{$k$-veto} rule that ends with $k$ zeros.  A positional scoring rule is \e{pure} if every $\Vec{s}_{m+1}$ can be obtained from $\Vec{s}_m$ by inserting a score value at some position {(while satisfying $\Vec{s}_{m+1}(1) \geq \dots \geq \Vec{s}_{m+1}(m+1)$)}.

Given a ranking profile $\R = (R_1, \dots, R_n)$ a positional scoring rule $r$, the score $s_r(R_i, c)$ that the voter $v_i$ contributes to the candidate $c$ is $\Vec{s}_m(j)$, where $j$ is the position of $c$ in $R_i$. The score of $c$ in $\R$ is $s_r(\R, c) = \sum_{i=1}^n s_r(R_i, c)$, which we also denote as $s(\R, c)$ if $r$ is clear from context. A candidate $c$ is a \e{winner} if $s_r(\R, c)\geq s_r(\R, c')$ for all candidates $c'$. The set $r(\R)$ contains all winners. 

A positional scoring rule $r$ is \e{two-valued} if there are only two values in each $\Vec{s}_m$. For such rules, we assume, w.l.o.g., that $\Vec{s}_m$ consists only of zeros and ones, and hence is of the form $\Vec{s}_m = (1, \dots, 1, 0, \dots, 0)$. Thus, we can denote any two-valued rule as $k$-approval, where $k = k(m)$ may depend on $m$. For example, $(m-2)$-approval is the same as 2-veto.

For $k$-approval, we can naturally convert a ranking profile $\R = (R_1, \dots, R_n)$ to an \e{approval profile} $\A = (A_1, \dots, A_n)$, where each $A_i \subseteq C$ consists of the first $k$ candidates in the order $R_i$. In other words, $A_i$ denotes the set of candidates that the voter $v_i$ ``approves.'' The score $s(A_i, c)$ that the voter $v_i$ contributes to the candidate $c$ is one if $c \in A_i$ and zero otherwise. The winners then are the candidates with the maximal score $s(\A, c) = \sum_{i=1}^n s(A_i, c)$.

\subsection{Incomplete Profiles}
\def\calR{\mathcal{R}}
\def\uR{{\calR}}
\def\calA{\mathcal{A}}
\def\uA{\tilde{\calA}}
Throughout this paper, we study problems where voter preferences are incompletely specified, and we are interested in ``possible'' and ``necessary'' winners. 
Abstractly speaking, an incomplete voting profile is simply a set $\uR$
of ranking profiles. Given $\uR$, a candidate $c$ is called a \e{possible winner} w.r.t.~a voting rule $r$ if $c$ is a winner in at least one profile $\R\in\uR$, 
and a \e{necessary winner} w.r.t.~$r$ if $c$ is a winner in every profile $\R\in\uR$.
In contrast to possible winners, necessary winners may not exist. 

Incomplete profiles give rise to challenging computational problems when they are represented in a compact manner. For example, \citet{Konczak2005VotingPW} use a partial order over the candidates to represent the set of linear extensions, as we explain next. In the following section, we introduce another compact representation and compare it to their one.

\subsection{Partial Order Profiles}
A \e{partial order profile} $\P = (P_1, \dots, P_n)$ consists of $n$ partial orders (reflexive, anti-symmetric, transitive relations) on the set $C$ of candidates, where each $P_i$ represents the incomplete preferences of voter $v_i$. A \e{ranking completion} of $\P$ is a ranking profile $\R = (R_1, \dots, R_n)$ where each $R_i$ is a completion (i.e., linear extension) $P_i$.
As said above, a candidate $c$ is a necessary winner if $c$ is a winner in every ranking completion $\R$ of $\P$, and $c$ is a possible winner if there exists a ranking completion $\R$ of $\P$ where $c$ is a winner. 

For a positional scoring rule $r$, the decision problems  $\pw$ and $\nw$ (where ``po'' stands for ``partial order'') are those of determining, given a set $C$ of candidates,
a partial order profile $\P$ and a candidate $c \in C$, whether $c$ is a possible winner and a necessary winner, respectively. A classification of the complexity of these problems for positional scoring rules was established in a sequence of publications.

\begin{theorem}[\citet{DBLP:journals/jcss/BetzlerD10,DBLP:journals/jair/XiaC11,DBLP:journals/ipl/BaumeisterR12}]
  $\nw$ can be solved in polynomial time for every positional scoring rule. $\pw$ is solvable in polynomial time for plurality and veto; for all other pure positional scoring rules, $\pw$ is NP-complete.
\label{thm:classification}
\end{theorem}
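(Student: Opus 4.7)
The plan is to split the theorem into its three constituent claims: (i) $\nw$ is in \textsf{P} for every positional scoring rule; (ii) $\pw$ is in \textsf{P} for plurality and veto; and (iii) $\pw$ is \textsf{NP}-complete for every other pure positional scoring rule. Each is attacked by a separate technique, and the three parts are essentially independent.

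For (i), I would check, for each $c' \neq c$, whether there is a ranking completion in which $c'$ ties or beats $c$; $c$ is a necessary winner iff no such $c'$ exists. The key step is to observe that the ``worst case for $c$ vs.\ $c'$'' decomposes voter-by-voter: for each voter's partial order $P_i$, one can independently construct a single linear extension that simultaneously minimizes $s(R_i,c)$ and maximizes $s(R_i,c')$. This is achieved by a greedy topological sort that places $c'$ above $c$ whenever $P_i$ does not force the opposite, and then pushes $c$ as low as possible while pulling $c'$ as high as possible among the remaining elements. Summing the resulting per-voter contributions gives a polynomial-time test.

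For (ii), under plurality the only per-voter information that matters is which candidate sits on top, and the candidates that can sit on top of some extension of $P_i$ are exactly the maximal elements of $P_i$. To test whether $c$ is a possible winner, I would first assign $c$ the top slot in every voter where $c$ is maximal, and then model the allocation of the remaining top slots to other candidates as a bipartite feasibility problem in which each other candidate $c'$ has a capacity dictated by $c$'s accumulated plurality score. A standard flow/matching argument decides feasibility. Veto is handled by the dual construction on minimal elements of the partial orders.

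The main obstacle is (iii). Here I would follow the reduction templates of \citet{DBLP:journals/jcss/BetzlerD10} and \citet{DBLP:journals/jair/XiaC11}, closed up by \citet{DBLP:journals/ipl/BaumeisterR12}. Natural starting points are \textsc{3-Dimensional Matching} or \textsc{Exact Cover by 3-Sets}; each triple is encoded by a voter whose partial order leaves a few ambiguous positions, arranged so that completions correspond to choices of set elements and the distinguished candidate $c$ beats all others exactly when a valid cover is chosen. The difficulty is that different scoring vectors require different gadgets, so one must locate a local pattern of three consecutive scores $\Vec{s}_m(j-1), \Vec{s}_m(j), \Vec{s}_m(j+1)$ with asymmetric gaps, and then invoke purity to argue that as $m$ grows every rule other than plurality and veto eventually exposes such a pattern in a usable position. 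Combining finitely many gadgets to cover every remaining pure rule uniformly in $m$ is the delicate part, and is where most of the work of the original papers lies.
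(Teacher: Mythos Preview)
The paper does not prove this theorem at all: it is the Classification Theorem for partial-order profiles, stated with attribution to \citet{DBLP:journals/jcss/BetzlerD10}, \citet{DBLP:journals/jair/XiaC11}, and \citet{DBLP:journals/ipl/BaumeisterR12} and used purely as background for the new results on partial spatial profiles. There is therefore nothing in the paper to compare your proposal against.

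That said, your sketch is a fair high-level summary of the original proofs, with one caveat in part (i). You claim one can build, for each voter, a single linear extension that \emph{simultaneously} minimizes $s(R_i,c)$ and maximizes $s(R_i,c')$. That is stronger than what is needed and not always achievable when the partial order entangles $c$, $c'$, and other elements. What the Xia--Conitzer argument actually does is maximize the \emph{difference} $s(R_i,c')-s(R_i,c)$ per voter, which is a single objective and can be computed in polynomial time by ranging over the (polynomially many) feasible position pairs for $c$ and $c'$ and greedily filling the rest. Your parts (ii) and (iii) match the standard flow/matching argument for plurality and veto and the gadget-based \textsc{X3C}/\textsc{3DM} reductions (with the Baumeister--Rothe closure for the last open two-approval case), respectively.
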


\section{The Model of Partial Spatial Voting}\label{sec:model}

We introduce a model of incompleteness for spatial voting. A \e{partial spatial profile} $\P = (P_1, \dots, P_n)$ consists of a vector $P_j = \ip{[\ell_{j, 1}, u_{j, 1}], \dots, [\ell_{j, d}, u_{j, d}]}$ for every voter $v_j$. Each pair $[\ell_{j, i}, u_{j, i}]$ represents {a closed interval of possible values for the position of $v_j$ on issue $i$: $\ell_{j, i}$ is a lower bound, and $u_{j, i}$ is an upper bound.}
Note that the positions of the voters are incompletely specified, but those of the candidates are known precisely. Let $[n]$ denote $\set{1,\dots,n}$.
A spatial voting profile $\T = (T_1, \dots, T_n)$ is a \e{spatial completion} of $\P$ if $T_{j, i} \in [\ell_{j, i}, u_{j, i}]$ for every voter $v_j$ and issue $i \in [d]$.
We can then compute a ranking  profile $\R_\T$ as before.\footnote{{We can model the scenario where we do not know anything about voter $v_j$'s position regarding issue $i$ by setting $\ell_{j,i} = -\infty$ and $u_{j,i} = +\infty$. Our algorithms can handle this case efficiently through minor modifications.}}

We call a ranking profile $\R$ a \e{ranking completion} of $\P$ if there exists a spatial completion $\T$ such that $\R = \R_\T$. For $k$-approval, it will be useful to convert the ranking profile to an approval profile, as described in \Cref{sec:preliminaries}. We call an approval profile $\A$ an \e{approval completion} of $\P$ if there exists a spatial profile $\T$ such that $\R_\T$ is converted to $\A$.

Again, given a partial spatial profile $\P$, a candidate is a necessary winner if it is a winner in every ranking completion $\R$ of $\P$, and a possible winner if there exists a ranking completion $\R$ of $\P$ where it is a winner.  For a positional scoring rule $r$ and dimension $d$, we consider the decision problems where we are given a set $C$ of $d$-dimensional candidates, a partial spatial profile $\P$, and a candidate $c \in C$, and we need to determine whether $c$ is a possible or a necessary winner. We denote these two problems by $\pwpar{d}$ and $\nwpar{d}$, respectively. Note that the number of dimensions is fixed and not part of the input for the problem.

\subsection{Partial Spatial vs. Partial Order Profiles}
Before we move on, we make a note on the expressiveness of partial spatial voting compared to partial-order profiles. We say that a partial profile $\P$ (in one of the two models) can be \e{expressed} by the other model if there is a partial profile $\P'$ in the other model with the same set of ranking completions. In the case of full information, every (complete) profile can be expressed by a spatial profile with $d \leq \min\{m,n\}$ dimensions \cite{bogomolnaia2007euclidean}.

For every number $d$ of issues, we can easily come up with partial-order profiles (and even complete ranking profiles) that cannot be expressed as (partial) spatial profiles, simply by using the property that in spatial voting all voters must respect the positions of the candidates, where in partial orders each voter can have a completely different structure.
For example, if $d=1$ then preferences will be single-peaked.
Moreover, a partial order can have strictly more ranking completions than the upper bound of a partial spatial order. Indeed, while a partial order can have $\Omega(m!)$ completions, \citet{jamieson2011active} showed that a partial spatial voter has $\O(dm^{2d})$ completions (see \Cref{res:boundedRankingsHighDim}).

On the other hand, consider an instance with three candidates $C = \set{c_1, c_2, c_3}$, one-dimensional positions $\cvec{c}_1 = 1$, $\cvec{c}_2 = 2, \cvec{c}_3 = 3$, and a single voter with $P = [1, 3]$. The voter has four ranking completions: $(c_1, c_2, c_3)$, $(c_2, c_1, c_3)$, $(c_2, c_3, c_1)$, and $(c_3, c_2, c_1)$.
It is easy to verify that there is no partial order with this set of ranking completions.

We conclude that complexity results on possible or necessary winners for the partial-order model do not immediately imply results for the partial spatial model, and vice versa, since neither of the two models generalizes the other.

\section{Computing Necessary Winners}\label{sec:necessaryWinners}
In this section, we show that the necessary-winner problem can be solved in polynomial time, similarly to $\nw$ (as stated in \Cref{thm:classification}),  for every positional scoring rule and for every fixed number of dimensions.

\begin{theorem}
\label{thm:nwd}
Let $d \geq 1$ be fixed. $\nwpar{d}$ is solvable in polynomial time for every positional scoring rule.
\end{theorem}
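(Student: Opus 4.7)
The plan is to exploit the fact that, for fixed $d$, each voter has only polynomially many possible ranking completions, and then reduce the necessary-winner problem to independent per-voter maximization for each potential ``beater'' of the target candidate $c$.

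First I would bound the number of ranking completions per voter. For a single voter $v_j$ with box $B_j = \prod_{i\in[d]}[\ell_{j,i},u_{j,i}]\subseteq\real^d$, the induced ranking at an ideal point $T_j\in B_j$ is determined by the signs of the $\binom{m}{2}$ perpendicular-bisector hyperplanes $H_{i,i'}=\{x:\norm{x-\cvec{c_i}}_2=\norm{x-\cvec{c_{i'}}}_2\}$. These are genuine affine hyperplanes in $\real^d$ (the squared-distance terms cancel), so their arrangement has $O(m^{2d})$ full-dimensional cells, and in each open cell the comparison between every pair of candidates is fixed; boundary points induce extra rankings via tie-breaking but only polynomially many more. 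Hence $|\calR_j|=O(m^{2d})$, matching the bound mentioned in the excerpt, and all rankings in $\calR_j$ can be enumerated in time polynomial in $m$ and $n$: iterate over sign patterns of the $\binom{m}{2}$ bisectors, test non-emptiness of the corresponding cell intersected with $B_j$ by a polynomial-size LP feasibility check, and for each feasible cell read off the induced ranking by sorting candidates along any witness point.

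Next I would reduce necessary-winner testing to pairwise score-margin maximization. By definition, $c$ fails to be a necessary winner iff some ranking completion $\R$ and some $c'\neq c$ satisfy $s_r(\R,c')-s_r(\R,c)>0$. Since the voters' ranking choices are independent (no cross-voter constraint couples the boxes $B_j$), for each fixed $c'\neq c$ we have
\[
\max_{\R\in\uR}\bigl(s_r(\R,c')-s_r(\R,c)\bigr)=\sum_{j=1}^{n}\max_{R_j\in\calR_j}\bigl(s_r(R_j,c')-s_r(R_j,c)\bigr).
\]
Each inner maximum is computable by scanning the polynomially many rankings in $\calR_j$ enumerated above, so the whole right-hand side is computable in polynomial time. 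Thus the algorithm outputs ``$c$ is a necessary winner'' iff for every $c'\neq c$ this value is $\le 0$.

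The main obstacle I expect is the cell-enumeration step: one has to argue cleanly that the arrangement really has $O(m^{2d})$ cells, that enumerating them is polynomial for fixed $d$ (not just their count, but generating one witness per non-empty cell intersected with $B_j$), and that tie-breaking on measure-zero boundary faces adds only polynomially many additional rankings and is handled correctly. Everything after that is bookkeeping: correctness of the reduction follows immediately from the independence of voters and from the definition of ``winner'' as any score-maximizer, so that a margin of exactly $0$ for $c'$ still leaves $c$ in the winner set.
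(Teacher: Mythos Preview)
Your proposal follows essentially the same approach as the paper: reduce necessary-winner testing to per-voter maximization of $s_r(R_j,c')-s_r(R_j,c)$ over each $c'\neq c$, bound the number of rankings per voter by the cell count of the arrangement of the $\binom{m}{2}$ perpendicular-bisector hyperplanes, and recover each ranking from a witness point obtained via LP feasibility against the voter's box.

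There is one genuine gap, and you correctly flagged its location. Your proposed enumeration (``iterate over sign patterns of the $\binom{m}{2}$ bisectors, test non-emptiness by an LP'') is exponential as written: there are $2^{\binom{m}{2}}$ sign patterns, so even though only $O(m^{2d})$ of the LPs are feasible, you cannot afford to try them all. The paper closes this gap with an incremental construction of the arrangement (its Algorithm~1): start with a single face, and for each hyperplane $h$ in turn, split only those existing faces that $h$ actually intersects (checked by an LP). This keeps the number of faces ever created bounded by the true cell count $O(dm^{2d})$, so the total work is polynomial. Any standard arrangement-enumeration routine with output-sensitive complexity would also do, but brute-forcing sign vectors does not. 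Once you replace that step, the rest of your argument goes through exactly as in the paper; the tie-breaking issue you worry about is harmless, since the paper's LP test on closed half-spaces already returns witnesses on lower-dimensional faces and the fixed tie-breaking rule makes each such witness yield a well-defined ranking.
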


The remainder of this section is devoted to proving \Cref{thm:nwd}.
To determine whether a candidate $c$ is a necessary winner for a given partial spatial profile $\P$, we use the same concept from the algorithm for $\nw$ given partial orders  \citep{DBLP:journals/jair/XiaC11}: a candidate $c$ is \e{not} a necessary winner if and only if there exists another candidate $c'$ and a ranking completion $\R$ where $s(\R, c') > s(\R, c)$. To this end, we iterate over every other candidate $c'$ and compute the maximal score difference $s(\R, c') - s(\R, c)$ among the ranking completions $\R$ of $\P$. Observe that it suffices to consider each voter $v_j \in V$ separately and compute the maximal score difference $s(R_j, c') - s(R_j, c)$ among the  ranking completions $R_j$ of $P_j$, since we can sum these values to obtain the maximal value of $s(\R, c') - s(\R, c)$. Then, $c$ is not a necessary winner if and only if the maximal score difference is positive for some candidate $c'$. 

The difference between our algorithm and the one for partial orders is in the way we compute the maximal score difference for each voter. We show that for a partial spatial voter, we can enumerate all ranking completions in polynomial time and compute the score difference in each ranking. This is impossible for partial orders, where the number of ranking completions can be exponential in $m$.

Next, we prove that we can enumerate the ranking completions of a single voter in polynomial time. We will use a geometric concept from the proof of \citet{jamieson2011active} of a polynomial bound on the number of possible complete rankings of a given vote.

\begin{lemma}[\citealp{jamieson2011active}] 
\label{res:boundedRankingsHighDim}
At most $\O(m^{2d})$ rankings can be constructed from spatial votes over a given sequence $\cvec{c}_1, \dots, \cvec{c}_m$ of $d$-dimensional candidates.
\end{lemma}

The proof of \Cref{res:boundedRankingsHighDim} relies on the following main idea.
Every pair $c,c'$ of  candidates corresponds to a $(d-1)$-dimensional hyperplane partitioning $\real^d$ into two $d$-faces (halfspaces): points closer to $c$, and points closer to $c'$. 
The set of all (at most) $\binom{m}{2} = m(m-1)/2$ hyperplanes then partitions $\real^d$ into a set $\Phi$ of regions, as illustrated in \Cref{fig:boundedRankings2D}. There is a one-to-one correspondence between these regions and the possible rankings: a face $\phi \in \Phi$ consists of exactly those points in $\real^d$ where the ranking of candidates in $C$ according to distance does not change, and no two of these faces correspond to the same ranking of candidates.

\Cref{res:boundedRankingsHighDim} implies that the number of ranking completions of a partial spatial vote 
is at most $\O(m^{2d})$. However, the bound itself is not enough for proving \Cref{thm:nwd}; we need to explicitly construct (and not just count) all feasible completions in order to parse them in the computation of the maximal score difference. Next, we explain how this is done. 

\begin{figure}[t]
  \centering
  \scalebox{1}{\input{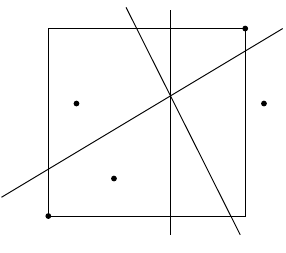_t}}
  \caption{
  An illustration of the proof of  \Cref{res:boundedRankingsHighDim} for $d=2$ and $C = \set{c_1, c_2, c_3}$. A voter can be positioned at any point in the rectangle  $[\ell_1, u_1] \times [\ell_2, u_2]$. Each line $H_{i,j}$  partitions $\real^2$ into 2 regions: the points closer to $c_i$, and the points closer to $c_j$. 
  The top-left region\,---\,above $H_{1,2}$ and to the left of $H_{2,3}$ and $H_{1,3}$\,---\,corresponds to the possible positions of the voter where the preference ranking equals $c_1 \succ c_2 \succ c_3$.
  \label{fig:boundedRankings2D} 
  }
\end{figure}

\begin{algorithm}[t]
\caption{Enumerate Rankings}\label{alg:enumerateRankings}
\DontPrintSemicolon
\SetKwInOut{KwIn}{Input}
\SetKwInOut{KwOut}{Output}

\KwIn{$m$ candidates $C$ as points in $\mathbb{R}^d$ and the rectangle corresponding to $\P$ (via $2d$ linear inequalities)}
\KwOut{set of possible rankings of $C$ according to possible positions $\P$}

$H \leftarrow \emptyset$ \tcp*{set of hyperplanes}
$\Phi \leftarrow \emptyset$ \tcp*{set of $d$-faces}
$R \leftarrow \emptyset$ \tcp*{set of possible rankings}
\ForEach{\textup{distinct pair } $c,c' \in C$} {
    $h \leftarrow$ hyperplane corresponding to ``middle'' of $c$ and $c'$\;
    $H \leftarrow H \cup \{h\}$
}
$G \leftarrow$ incidence graph of $H$ \tcp*{can be obtained as in \cite{edelsbrunner1986constructing}}
\ForEach{\textup{node $x$ of $G$ corresp. to a $d$-face}} {
    $\phi \leftarrow \emptyset$ \tcp*{linear ineq.s bounding $d$-face $x$}
    \ForEach{\textup{node $y$ in $G$ corresp. to a $(d-1)$-face incident to $x$}}{
        $L \leftarrow $ linear inequalities corresponding to $y$\;
        $\phi \leftarrow \phi \cup L$
    }
    $\Phi \leftarrow \Phi \cup \phi$
}
\ForEach{$\phi \in \Phi$}{
    \If{$\phi \cap P$} {
        $w_\phi \leftarrow$ point in intersection of $\phi$ and $\P$\;
        $r \leftarrow C,$ sorted by distance to $w_\phi$\;
        $R \leftarrow R \cup \{r\}$
    }
}
\Return $R$
\end{algorithm}

\begin{lemma}
\label{res:iterateRankingsHighDim}
Let $C$ be a set of $m$ $d$-dimensional candidates and $P = \ip{[\ell_{1}, u_{1}], \ldots, [\ell_{d}, u_{d}]}$ a $d$-dimensional partial spatial voter. The set of ranking completions of $P$ can be enumerated in polynomial time.
\end{lemma}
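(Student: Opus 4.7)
The plan is to enumerate the ranking completions position by position through an LP-guided backtracking. The key observation is that for any two candidates $c, c'$, the constraint that a point $T \in B$ is no farther from $c$ than from $c'$ is linear in $T$: expanding $\norm{T - \cvec{c}}_{2}^{2} \leq \norm{T - \cvec{c'}}_{2}^{2}$ cancels the $\norm{T}_{2}^{2}$ term and yields the linear inequality $2T \cdot (\cvec{c'} - \cvec{c}) \leq \norm{\cvec{c'}}_{2}^{2} - \norm{\cvec{c}}_{2}^{2}$. Therefore, for any ordered prefix $\pi = (c_{\pi_{1}}, \ldots, c_{\pi_{i}})$ of candidates, the set of points $T \in B$ whose top-$i$ ranking is $\pi$ is a polytope, cut out by $B$ together with the $i - 1$ distance comparisons between consecutive candidates of $\pi$ and the $m - i$ comparisons forcing every candidate outside $\pi$ to be no closer than $c_{\pi_{i}}$; feasibility of this polytope can be decided by a single linear program.

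I would then enumerate rankings by a layered backtracking. Initialize $\Pi_{0} = \{()\}$ and, for $i = 0, 1, \ldots, m - 1$, for every prefix $\pi \in \Pi_{i}$ and every candidate $c \notin \pi$, test LP feasibility of the polytope associated with $\pi \cdot (c)$; if feasible, insert $\pi \cdot (c)$ into $\Pi_{i+1}$. After $m$ rounds, output $\Pi_{m}$ as the set of ranking completions of $P$.

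Correctness and the running-time bound rest on the observation that every realizable prefix of length $i$ extends to at least one realizable full ranking, namely the ranking at any witnessing point $T \in B$. Consequently $|\Pi_{i}| \leq |\Pi_{m}|$, and by \Cref{res:boundedRankingsHighDim} the latter is bounded by $\O(m^{2d})$. The algorithm therefore performs at most $\O(m^{2d+2})$ LP feasibility tests, each of size polynomial in $m$, yielding a total running time of $\poly(m)$ for fixed $d$.

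The main obstacle I anticipate is the careful handling of strict versus non-strict distance comparisons alongside the paper's tie-breaking convention: an open cell of the bisector arrangement corresponds to strict inequalities, whereas the LPs above use ``$\leq$''. I would resolve this by using ``$\leq$'' constraints uniformly and interpreting a prefix as realizable whenever it is a linear extension of the strict distance partial order at some $T \in B$; under the stated deterministic tie-breaking mechanism this coincides with the set of ranking completions, since any tie-broken ranking at a boundary point matches the ranking of some adjacent full-dimensional cell of the bisector arrangement, so the $\O(m^{2d})$ bound of \Cref{res:boundedRankingsHighDim} still controls $|\Pi_{m}|$.
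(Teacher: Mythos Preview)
Your backtracking idea is natural, but the running-time analysis has a real gap. With weak ``$\leq$'' distance constraints, the set $\Pi_m$ you compute is \emph{not} the set of ranking completions in the paper's sense (one tie-broken ranking per point); it is the set of all linear extensions of the weak distance order at some point of $B$, and that set can be exponential. A concrete one-dimensional instance: place the $m=2k$ candidates at positions $-k,\ldots,-1,1,\ldots,k$ and let $B$ contain the origin. At $T=0$ the candidates $-i$ and $+i$ are equidistant for every $i$, so every ranking that lists $\{-1,1\}$ first (in either order), then $\{-2,2\}$, and so on, satisfies all your ``$\leq$'' constraints. That is already $2^{k}=2^{m/2}$ rankings in $\Pi_m$, and the intermediate layers $\Pi_i$ blow up correspondingly. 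Note that only two of these $2^{k}$ rankings are realized by any full-dimensional cell (the two cells adjacent to $0$), so most of $\Pi_m$ consists of rankings that are not ranking completions under any deterministic tie-breaking.

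Your last paragraph conflates two different objects. The statement ``any tie-broken ranking at a boundary point matches the ranking of some adjacent full-dimensional cell'' is (essentially) correct, but it bounds the number of \emph{tie-broken} rankings, not the number of weak linear extensions that your LPs actually accept. \Cref{res:boundedRankingsHighDim} counts $d$-faces of the arrangement and therefore does not control $|\Pi_m|$ as you use it. The paper sidesteps this entirely by enumerating the cells of the bisector arrangement (an incremental construction, testing each new hyperplane against each current face by an LFP) and then, for each cell intersecting $B$, reading off a single witness ranking. A fix closer in spirit to your approach would be to test each prefix polytope for \emph{full dimensionality} (e.g., maximize a common slack variable and require the optimum to be positive); then every accepted prefix extends to the strict ranking of an open cell, and the $\O(m^{2d})$ bound genuinely applies. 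As written, however, the argument does not establish polynomial time.
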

\begin{proof}
The enumeration algorithm uses the geometric interpretation described above.
A pseudocode of the algorithm is given in \Cref{alg:enumerateRankings}.
Given the candidates $C$, we compute the corresponding set $H$ of at most $m(m-1) / 2$ hyperplanes. To represent the \e{arrangement} of these hyperplanes, i.e., of the geometric relation of the ($d$-)faces spanned by the hyperplanes, we construct an \e{incidence graph} $G(H)$. It consists of a node for each face of the arrangement, i.e., a node for each point, line(-segment), plane(-segment), etc., where two or more hyperplanes intersect. Furthermore, two nodes are connected by an edge if the corresponding faces are incident, i.e., one is contained in the other. Using an algorithm of \citet{edelsbrunner1986constructing}, $G(H)$ can be constructed in $\O(m^{2d})$ time.

We iterate over the nodes in $G(H)$ that correspond to $d$-faces of the arrangement. By \Cref{res:boundedRankingsHighDim}, there are at most $\O(m^{2d})$ such nodes. For each node $x$ be such a node. By considering all $(d-1)$-faces incident to $x$, as well as~$\P$, we represent the intersection of the $d$-face with $\P$ as a set of $\O(m^2 + d)$ linear inequalities with $d$ variables. We can then determine whether the $d$-face and $\P$ intersect, by checking the feasibility of a linear program with the aforementioned set of constraints, which  can be done in polynomial time. If there is a feasible solution (i.e., a point in the intersection), we compute the ranking from that point.
\end{proof}

This completes the proof of \Cref{thm:nwd}. Note that the running time of our algorithm is exponential in $d$, since we enumerate all ranking completions of a voter. {We can show} that for plurality and veto, one can find the necessary winners in time polynomial in $d$ (and in $n$ and $m$).
We do so by a reduction to $\nw$,
which is tractable~\citep{DBLP:journals/jcss/BetzlerD10}.
Hence, the necessary winners can be found in polynomial time for plurality and veto, even if $d$ is part of the input.

\begin{theorem}
    \label{thm:nwdPluVeto}
    For plurality and veto, $\nwpar{d}$ is solvable in time $\poly(n, m, d)$.
\end{theorem}
\begin{proof}
We show a reduction to $\nw$ under plurality and veto, respectively, in the model of partial orders, which is known to be solvable in polynomial time \citep{DBLP:journals/jcss/BetzlerD10}.
We start with the case of plurality. Let $C$ be a set of $d$-dimensional candidates and $\P = (P_1, \dots, P_n)$ a partial spatial profile. For every voter $v_j$, let $\Ctop^j \subseteq C$ be the set of candidates that are ranked in the first position in at least one ranking completion of $P_j$. (I.e., candidates that can receive a score of 1 from $v_j$.) Define a partial order $P_j' = (\Ctop^j \succ (C \setminus \Ctop^j))$. It is easy to verify that the ranking completions of $P_j$ and $P_j'$ result in the same possible scores. Therefore a candidate $c$ is a possible winner for the partial spatial profile $\P$ if and only if $c$ is a possible winner for the partial order profile $\P' = (P_1', \dots, P_n')$.

We now show that we can construct $\P'$ in polynomial time. Let $v_j$ be a voter with $P_j = \ip{(\ell_{j, 1}, u_{j, 1}), \dots, (\ell_{j, d}, u_{j, d})}$. Let $c_i \in C$. To test whether $c_i \in \Ctop^j$, we search for a completion $T_j = \ip{T_{j, 1}, \dots, T_{j, d}}$ of $P_j$ where $c_i$ is ranked first. Put differently, $\norm{T_j-\cvec{c}_i}_2 > \norm{T_j-\cvec{c}_t}_2$ for every other candidate $c_t \neq c_j$. Using the definition of Euclidean distance, we can rewrite this condition as 
\begin{equation} \label{eq:topRanked}
    2\sum_{k = 1}^d (c_{i, k} - c_{t, k}) T_{j, k} > \sum_{k = 1}^d c_{i, k}^2 - c_{t, k}^2
\end{equation}
We can therefore decide whether $c_i \in \Ctop^j$ by checking the feasibility of a linear program, where we search for a vector $T_j = \ip{T_{j, 1}, \dots, T_{j, d}}$ which satisfies $T_{j, k} \in [\ell_{j, k}, u_{j, k}]$ for every $k \in [d]$ and \Cref{eq:topRanked} for every candidate $c_t \neq c_i$. Since we have $d$ variables and $d+m-1$ inequalities, this can be done in time $\poly(n, m, d)$ \cite{Khac79a}, which completes the reduction. 

For veto we use the same idea with a small modification. For every voter $v_j$, let $\Cbottom^j \subseteq C$ be the set of candidates that are ranked last in at least one ranking completion of $P_j$, and define a partial order $P_j' = ((C \setminus \Cbottom^j) \succ \Cbottom^j)$. Each set $\Cbottom^j$ can be constructed in a very similar manner to $\Ctop^j$.
\end{proof}

\section{Computing Possible Winners}\label{sec:possible-winners}

We now turn to the problem of computing possible winners and show that, for some positional scoring rules, this problem is closely related to a scheduling problem. 

\subsection{The Case of a Single Dimension}
We start with the one-dimensional case ($d=1$) and study the complexity of $\pwpar{1}$. In this case, every candidate $c$ is associated with a single real value $\cvec{c}$. We assume without loss of generality that $\cvec{c}_1 < \dots < \cvec{c}_m$. A partial profile $\P$ consists of a pair $P_j = [\ell_j, u_j]$ for every voter $v_j$.
For partial orders, finding the possible winners is NP-complete except for plurality and veto (see \Cref{thm:classification}). In spatial voting, 
we are able to provide efficient algorithms for multiple well-studied classes of scoring rules. We begin our investigation with two-valued scoring rules,  
and then prove tractability for other classes of positional scoring rules.

\subsubsection{Two-Valued Rules}\label{sec:pw1TwoValued}
We begin with two-valued (one/zero) scoring rules. The simplest and most well-known rules in this class are plurality and veto.\footnote{We later show that for plurality and veto, $\pwpar{d}$ is solvable in polynomial time for every fixed $d \geq 1$. In particular, the tractability of $\pwpar{1}$ follows from both \Cref{thm:pw1Approval} and \Cref{thm:pwdPluVeto}.} 
We show that $\pwpar{1}$ is tractable for any two-valued rule. Recall that we denote a two-valued rule as $k$-approval for $k = k(m)$.

We introduce an alternative definition for partial spatial profiles for $k$-approval, in the case of a single dimension. Let $\P = (P_1, \dots, P_n)$ be a partial spatial profile where every voter $v_j$ is associated with a pair $P_j = [\ell_j, u_j]$. Since we assume $\cvec{c}_1 < \dots < \cvec{c}_m$, the set of candidates that $v_j$ possibly approves in a completion of $P_j$ is a sequence $(c_{i_j}, c_{i_j + 1}, \dots, c_{i_j + t})$ of consecutive candidates. 
(We can find this sequence in polynomial time using \Cref{res:iterateRankingsHighDim}.) 
In every completion, the candidates that $v_j$ approves form a substring of length $k$ of $(c_{i_j}, c_{i_j + 1}, \dots, c_{i_j + t})$; moreover, every such substring is the approval set of some completion.
Hence, we can define a partial spatial profile $\P = (P_1, \dots, P_n)$ for $k$-approval as follows. Each voter $v_j$ is associated with a sequence $P_j = (c_\ell, c_{\ell + 1}, \dots, c_u)$ of at least $k$ consecutive candidates.  In an approval completion $\A = (A_1, \dots, A_n)$ of $\P$, the set $A_j$
is a substring of length $k$ of $P_j$.
We then use $\A$ to compute the scores of the candidates and select the winners. 

With this definition, we solve $\pwpar{1}$ in polynomial time for $k$-approval. We use a reduction to scheduling with arrival times and deadlines, defined as follows.

\begin{definition}[Non-preemptive multi-machine scheduling with arrival times and deadlines]
\label{def:scheduling}
We are given a set $M = \set{M_1, \dots, M_t}$ of identical machines and a set $\J = \set{J_1, \dots, J_n}$ of $n$ jobs. Each job $J_j$ has an arrival time $a_j$, a deadline $d_j$, and processing time $p_j$. We assume that $a_j, d_j, p_j \in \natural$. A feasible schedule is a mapping $f \colon \J \rightarrow \real \times M$ that maps each $J_j \in \J$ to a pair $f(J_j) = (s_j, h_j)$ such that the following properties hold: 
\begin{enumerate}
    \item Every job is processed between its arrival time and deadline: $a_j \leq s_j \leq d_j-p_j$ for all $j \in [n]$.
    \item  Each machine runs at most one job at any time: 
    if $h_i = h_j$ for $i,j \in [n]$ with $i \neq j$ and $s_i \leq s_j$, then $s_j \geq s_i + p_i$.
\end{enumerate}
\end{definition}

Since the arrival times, deadlines, and processing times are all integers, we may assume w.l.o.g. that the starting time of every job in a feasible schedule is also an integer.\footnote{Otherwise, we can iterate over the jobs, sorted from the smallest starting time to the largest, and change the starting time $s_i$ of job $J_i$ to $\floor{s_i}$, without harming the feasibility.
} 
We now present the algorithm for $\pwpar{1}$.

\begin{theorem}
\label{thm:pw1Approval}
$\pwpar{1}$ is solvable in polynomial time under every two-valued positional scoring rule.
\end{theorem}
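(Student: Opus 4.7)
The plan is to chain together the two preceding lemmas. A two-valued positional scoring rule, after normalizing by the gcd of its entries, reduces to $k$-approval for some $k=k(m)\in\{1,\ldots,m-1\}$; so the input is a set $C$ of candidates, a target $c\in C$, and a $1$-dimensional partial spatial profile $\P$, and I must decide in polynomial time whether $c$ is a possible winner under $k$-approval.

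First, I would invoke \Cref{lemma:necessaryApprovalConversion} on $(C,c,\P)$ to obtain, in polynomial time, a partial spatial profile $\P'$ on the same candidate set such that (i) every voter in $\P'$ either necessarily approves $c$ or never approves $c$, and (ii) $c$ is a possible winner of $\P'$ if and only if $c$ is a possible winner of $\P$. Second, I would feed $(C,c,\P')$ into \Cref{lemma:approvalPWScheduling}, whose hypothesis is exactly (i); it decides possible-winnerhood of $c$ in $\P'$ in polynomial time, and by (ii) this answer is correct for the original instance as well. Composing the two polynomial-time steps gives the desired algorithm.

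There is essentially no new technical obstacle at the theorem level: the real work has already been done inside the two lemmas. The one point to check carefully is that the conclusion of \Cref{lemma:necessaryApprovalConversion} matches the hypothesis of \Cref{lemma:approvalPWScheduling} verbatim, which it does by construction. This gluing is what turns two structural observations---that the problem can be restricted to a canonical approval pattern around the target, and that such canonical instances are equivalent to a uniform-processing-time multi-machine scheduling instance solvable via \citet{vakhania2012}---into a uniform polynomial-time algorithm for the entire class of two-valued positional scoring rules, yielding the theorem.
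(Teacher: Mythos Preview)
Your proposal is correct and matches the paper's approach exactly: the paper's proof of this theorem is a one-line observation that it follows by combining \Cref{lemma:necessaryApprovalConversion} and \Cref{lemma:approvalPWScheduling}, which is precisely the chaining you describe.
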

\begin{proof}
Let $k  = k(m)$. We are given a set $C$ of candidates, a candidate $c \in C$, and a partial spatial profile $\P = (P_1, \dots, P_n)$. We start by constructing another partial spatial profile $\P'$ with the following properties: \e{(i)} In $\P'$, each voter either necessarily approves $c$ or never approves $c$; and \e{(ii)} $c$ is a possible winner in $\P'$ if and only if $c$ is a possible winner in $\P$.

Let $V_c = \set{v_j \in V : c \in P_j}$ denote the voters who approve $c$ in at least one completion of $\P$. We define $\P' = (P_1', \dots, P_n')$ as follows. Let $v_j$ be a voter with $P_j = (c_\ell, c_{\ell+1}, \dots, c_u)$. If $v \notin V_c$ then $v$ never approves $c$, and we set $P_j' = P_j$. Otherwise, there exists $\ell \leq i \leq u$ for which $c_i = c$. Define $\ell' = \max\set{\ell, i-k+1}$ and $u' = \min\set{u, i+k-1}$. Then, $P_j' = (c_{\ell'}, c_{\ell' + 1}, \dots, c_{u'})$. Observe the following:
\begin{itemize}
    \item $\ell' \geq \ell$ and $u' \leq u$, hence $P_j'$ is a substring of $P_j$.
    \item $P_j'$ consists of at least $k$ candidates.
    \item Every substring of $P_j$ of length $k$ contains $c$. I.e., $c$ is necessarily approved by $v_j$ in $\P'$.
\end{itemize}

Observe that in $\P'$, the voters of $V_c$ necessarily approve $c$, and the voters of $V \setminus V_c$ never approve $c$, hence $s(\A, c) = |V_c|$ in every approval completion $\A$ of $\P'$. We show that $c$ is a possible winner in $\P$ w.r.t. $k$-approval if and only if $c$ is possible winner in $\P'$. One direction is trivial since every completion of $\P'$ is also a completion of $\P$. For the other direction, assume that there exists an approval completion $\A = (A_1, \dots, A_n)$ of $\P$ where $c$ is a winner. We construct a completion $\A' = (A_1', \dots, A_n')$ of $\P'$ as follows. Let $v_j$ be a voter. If $v_j \notin V_c$, or $v_j \in V_c$ and $c \in A_j$, then $A_j' = A_j$. Otherwise, $v_j \in V_c$ and $c \notin A_j$, i.e., $V$ does not approve $c$ in $A_j$. Let $A_j'$ be an arbitrary completion of $P_j'$. Note that $c \in A_j'$ by the construction of $P_j'$. We get that $s(A_j', c) = s(A_j, c) + 1$, and for every other candidate $c'$, $s(A_j', c') \leq s(A_j, c') + 1$, hence $c$ remains a winner when we change $A_j$ to $A_j'$. Overall, $c$ is a winner of $\A'$.

Next, we show a reduction from deciding whether $c$ is a possible winner in $\P'$ to multi-machine scheduling (from \Cref{def:scheduling}) where all jobs have the same processing time $k$. Deciding whether a feasible schedule exists in this setting is  solvable in polynomial time~\citep{vakhania2012}. In the reduction, the number of machines is $|V_c|$. For each voter $v_j \in V$ with $P'_j = (c_\ell, \dots, c_u)$, define a job $J_j$ with arrival time $\ell$, deadline $u+1$, and processing time $k$. 
We show that $c$ is a possible winner in $\P'$ w.r.t. $k$-approval if and only if there exists a feasible schedule of all the jobs.

Assume that there exists an approval completion $\A = (A_1, \dots, A_n)$ of $\P'$ where $c$ is a winner. Note that $s(\A, c) = |V_c|$ and $s(\A, c') \leq |V_c|$ for every other candidate $c' \in C$. We define a schedule as follows. For a voter $v_j$, let $A_j = \set{c_{i_j}, c_{i_j + 1} \dots, c_{i_j + k -1}}$ be the sequence of $k$ consecutive candidates that $v_j$ approves in $\A$. We schedule the job $J_j$ from time $i_j$ to time $i_j + k$. We show that this schedule is feasible for $|V_c|$ machines. By the definition of a completion, no job is started before its arrival and each job is completed by its deadline. Observe that for every $i \in [m]$ and $j \in [n]$, job $J_j$ is scheduled to run at time $[i, i+1]$ if and only if $v_j$ approves $c_i$ in $\A$. Since $s(\A, c') \leq |V_c|$ for every candidate $c' \in C$, we can deduce that any time at most $|V_c|$ jobs are scheduled, hence the schedule is feasible.

Conversely, assume there exists a feasible schedule (and recall our assumption that the starting times are integers). We construct a completion $\A = (A_1, \dots, A_n)$ of $\P$ as follows. For every voter $v_j$, let $[i_j, i_j + k]$ be the scheduled execution time of $J_j$ in the schedule. Define $A_j = \set{ c_{i_j}, c_{i_j + 1} \dots, c_{i_j + k -1} }$. Note that by the definition of the arrival time and the deadline, $A_j$ is a substring of $P_j'$ of length~$k$, hence each $A_j$ is indeed a completion of $P_j'$, and $\A$ is a completion of $\P'$. We show that $c$ is a winner of $\A$.
First, $s(\A, c) = |V_c|$ since $\A$ is a completion of $\P'$. Second, observe that for every $i \in [m]$ and $j \in [n]$, the voter $v_j$ approves $c_i$ in $\A$ if and only if the job $J_j$ is scheduled to run at time $[i, i+1]$. Since there are only $|V_c|$ machines, at any given time the number of jobs that are scheduled to run is at most $|V_c|$, which implies that $s(\A, c') \leq |V_c|$ for every candidate $c' \in C$. We deduce that $c$ is a winner in $\A$. 
\end{proof}

\subsubsection{Beyond Two-Valued Rules}
We consider two families of rules with more than two values. We refer to the first family as \e{weighted veto rules}. These rules are of the form $\Vec{s}_m = (\alpha, \dots, \alpha, \beta_1, \dots, \beta_k)$ for $\alpha > \beta_1 \geq \dots \geq \beta_k$ and $k < m/2$. The condition $k < m/2$ implies that each voter assigns the highest score $\alpha$ to more than half of the candidates. Moreover, for two positive integers $k$ and $t$, we denote by $F(k, t)$ the three-valued rule with scoring vector $\vec{s}_m = (2, \dots, 2, 1, \dots, 1, 0, \dots, 0)$ that begins with $k$ occurrences of two and ends with $t$ zeros. For example, the scoring vector for $F(2, 1)$ is $\vec{s}_m = (2, 2, 1, \dots, 1, 0)$.

\begin{theorem}
\label{res:pw1WeightedVeto}
$\pwpar{1}$ is solvable in polynomial time under every weighted veto rule.
\end{theorem}
\begin{proof}
Consider a weighted veto rule with $\Vec{s}_m = (\alpha, \dots, \alpha, \beta_1, \dots, \beta_k)$. Let $C$ be a set of 1-dimensional candidates and $\P$ a partial profile. We partition the candidates into three sets $\Cleft = \set{c_1, \dots, c_k}$, $\Cright = \set{c_{m-k+1}, \dots, c_{m}}$ and $\Cmiddle = C \setminus (\Cleft \cup \Cright)$. Note that $\Cmiddle \neq \emptyset$ since $k < m/2$. Each of the scores $\beta_1, \dots, \beta_k$ can only be assigned to the candidates of $\Cleft \cup \Cright$, and every voter assigns the score $\alpha$ to the candidates of $\Cmiddle$. Therefore for every ranking completion $\R$ of $\P$, every candidate $c \in \Cmiddle$ receives the maximal possible score $s(\R, c) = n \cdot \alpha$. We can deduce that all candidates of $\Cmiddle$ are possible winners. To complete the proof we describe how we can find the possible winners of $\Cleft \cup \Cright$ in polynomial time.

Let $c \in \Cleft \cup \Cright$. In order to be a winner in a ranking completion $\R$, $c$ must receive the maximal possible score $n \cdot \alpha$, otherwise it is defeated by the candidates of $\Cmiddle$. Hence, $c$ is a possible winner if and only if there exists a completion where every voter assigns the score $\alpha$ to $c$. We can therefore consider each voter separately, check if it can assign $\alpha$ to $c$ in some completion using \Cref{res:iterateRankingsHighDim}, and determine that $c$ is a possible winner if and only if that condition is satisfied for all voters.
\end{proof}

\begin{theorem}
\label{res:pw1ThreeValued}
$\pwpar{1}$ is solvable in polynomial time under $F(k,t)$ whenever $k > t$.
\end{theorem}
\begin{proof}
Let $C$ be a set of 1-dimensional candidates and $\P$ a partial profile. We denote the rule $F(k,t)$ by $r$ and $k$-approval by $r'$. Recall that for a ranking completion $\R$, we denote the score of $c$ in $\R$ w.r.t. a voting rule $r$ by $s_r(\R, c)$. As in the proof of \Cref{res:pw1WeightedVeto}, we partition the candidates to three sets $\Cleft = \set{c_1, \dots, c_t}$, $\Cright = \set{c_{m-t+1}, \dots, c_{m}}$ and $\Cmiddle = C \setminus (\Cleft \cup \Cright)$. 

We start by showing a connection between the scores of candidates in $\Cleft, \Cright$ and the scores of two specific candidates $c_{t+1}, c_{m-t} \in \Cmiddle$. Let $\R = (R_1, \dots, R_n)$ be a ranking completion of $\P$, let $v_j$ be a voter, and let $\cleft \in \Cleft$. Note that $v_j$ can only assign the score 0 to the candidates of $\Cleft \cup \Cright$, since it assigns 0 to the $t$ farthest candidates from its position, and $\Cleft \cup R$ are the first and last $t$ candidates on the line. Hence $s_r(R_j, c) \geq 1$ for every candidate $c \in \Cmiddle$, and in particular $s_r(R_j, c_{t+1}) \geq 1$.

We have three options for the score that $v_j$ assigns to $\cleft$. If the score is 0 then $s_r(R_j, \cleft) \leq s_r(R_j, c_{t+1}) + 1$, and if it is 1 then $s_r(R_j, \cleft) \leq s_r(R_j, c_{t+1})$. If $s_r(R_j, \cleft) = 2$ then we also have $s_r(R_j, c_{t+1}) = 2$ since $v_j$ assigns 2 to $k > t$ candidates, $\cleft$ is one of the $t$ left-most candidates, and $c_{t+1}$ is the $(t+1)$th candidate from the left. By summing the scores from all voters, we get the following:
\begin{align}
    s_r(\R, \cleft) \leq s(\R, c_{t+1}) - B(\R, \cleft) \label{eq:leftCandidateBound}
\end{align}
where $B(\R, c)$ is the number of voters who assign 0 to $c$ in $\R$. Similarly, for every $\cright \in \Cright$ we have
\begin{align}
    s_r(\R, \cright) \leq s(\R, c_{m-t}) - B(\R, \cright) \label{eq:rightCandidateBound}
\end{align}

We now show a connection between the scores of candidates in a completion w.r.t. $r$ and $r'$. Let $v_j$ be a voter and let $\cmiddle \in \Cmiddle$. Recall that $s_r(R_j, \cmiddle) \geq 1$. If $\cmiddle$ is among the top $k$ candidates in the ranking of $v_j$ then $s_r(R_j, \cmiddle) = 2$ and $s_{r'}(R_j, \cmiddle) = 1$. Otherwise, $\cmiddle$ is not among the top $k$ candidates in the ranking of $v_j$, which implies $s_r(R_j, \cmiddle) = 1$ and $s_{r'}(R_j, \cmiddle) = 0$. In both cases we get $s_r(R_j, \cmiddle) = s_{r'}(R_j, \cmiddle) + 1$, and overall
\begin{align}
     s_r(\R, \cmiddle) = s_{r'}(\R, \cmiddle) + n \label{eq:middleCandidateTransform}
\end{align}
For a candidate $\cleft \in \Cleft$ we have the same relation between the scores under $r$ and $r'$, unless $s_r(R_j, \cleft) = 0$ which implies $s_{r'}(R_j, \cleft) = s_r(R_j, \cleft) = 0$. We can apply the same argument for every $\cright \in \Cright$, and obtain the following.
\begin{align}
    s_r(\R, \cleft) &= s_{r'}(\R, \cleft) + n - B(\R, \cleft) \label{eq:leftCandidateTransform} \\
    s_r(\R, \cright) &= s_{r'}(\R, \cright) + n - B(\R, \cright) \label{eq:rightCandidateTransform}
\end{align}

We now present the algorithm to determine whether a candidate $c$ is a winner. We use a different procedure for each set of candidates $\Cleft, \Cmiddle, \Cright$. Let $c \in \Cmiddle$. We show that for every ranking completion $\R$ of $\P$, $c$ is a winner of $\R$ w.r.t. $r$ if and only if $c$ is a winner of $\R$ w.r.t. $r'$. Finding the possible winners under $r'$ is covered by \Cref{thm:pw1Approval}, hence we get that we can decide whether $c$ is a possible winner w.r.t. $r$ in polynomial time.

Assume that $c$ is a winner of $\R$ w.r.t. $r$, i.e., $s_r(\R, c) \geq s_r(\R, c')$ for every other candidate $c'$. Let $\cmiddle \in \Cmiddle$, by \Cref{eq:middleCandidateTransform} we get $s_{r'}(\R, c) \geq s_{r'}(\R, \cmiddle)$.
For every $\cleft \in \Cleft$, we apply Equations \ref{eq:leftCandidateBound}
and \ref{eq:leftCandidateTransform}.
\begin{align*}
    s_{r'}(\R, c) &= s_r(\R, c) - n \geq s_r(\R, c_{t+1}) - n \geq (s_r(\R, \cleft)) + B(\R, \cleft)) - n = s_{r'}(\R, c_{t+1})
\end{align*}
For $\cright \in \Cright$ we apply Equations \ref{eq:rightCandidateBound} and \ref{eq:rightCandidateTransform} in the same manner. We can deduce that $c$ is a winner of $\R$ w.r.t. $r'$.

Now, assume that $c$ is a winner of $\R$ w.r.t. $r'$. For every $\cmiddle \in \Cmiddle$ we can use \Cref{eq:middleCandidateTransform} again to show that $s_r(\R, c) \geq s_r(\R, \cmiddle)$. For $\cleft \in \Cleft$ we use Equations \ref{eq:middleCandidateTransform} and \ref{eq:leftCandidateTransform}.
\begin{align*}
    s_r(\R, c) &= s_{r'}(\R, c) + n \geq s_{r'}(\R, \cleft) + n \geq  s_{r'}(\R, \cleft) + n - B(\R, \cleft) = s_r(\R, \cleft)
\end{align*}
For $\cright \in \Cright$ we apply Equations \ref{eq:middleCandidateTransform} and \ref{eq:rightCandidateTransform} in the same manner. We can deduce that $c$ is a winner of $\R$ w.r.t. $r$. This completes the proof for the algorithm of $\pwpar{1}$ in the case that $c \in \Cmiddle$. 

We now show an algorithm for $c \in \Cleft$, and the case of $c \in \Cright$ is similar. In a completion $\R$, if $B(\R, c) > 0$ then $c$ is not a winner of $\R$ since $ s_r(\R, c) < s_r(\R, c_{t+1})$ by \Cref{eq:leftCandidateBound}. We define another partial profile $\P'$ where voters never assign 0 to $c$ and get that $c$ is a possible winner of $\P$ if and only if it is a possible winner of $\P'$. Note that $\P'$ can be easily constructed by inspecting the ranking completions of each voter $v_j$ and modifying the values of $P_j = (\ell_j, u_j)$ accordingly.

In every completion $\R$ of $\P'$ we have $s_r(\R, c) = s_{r'}(\R, c) + n$, since we can use \Cref{eq:leftCandidateTransform} with $B(\R, c) = 0$. By the same arguments that we had for the case of $c \in \Cmiddle$, we can show that~$c$ is a possible winner of $\P'$ w.r.t. $r$ if and only if it is a possible winner of $\P'$ w.r.t. $r'$.
\end{proof}

\subsection{The Case of Multiple Dimensions}
\label{sec:pw-d>1}

For $d > 1$, we show that the tractable cases of possible winners for the partial orders model, namely plurality and veto, are also tractable for the spatial model. 

\begin{theorem}\label{thm:pwdPluVeto}
For plurality and veto, $\pwpar{d}$ is solvable in time $\poly(n, m, d)$.
\end{theorem}
To prove \Cref{thm:pwdPluVeto}, we reduce $\pwpar{d}$ to $\pw$, which is solvable in polynomial time~\citep{DBLP:journals/jcss/BetzlerD10}, 
similar to the proof of \Cref{thm:nwdPluVeto}.
We get again a running time that is polynomial in $n$, $m$ and $d$. Hence, the possible winners can be found in polynomial time for plurality and veto even if $d$ is part of the input. For $k$-approval, the problem becomes intractable for $d \geq 2$, and every $k \geq 3$.

\begin{figure}
  \centering
  \scalebox{1}{\input{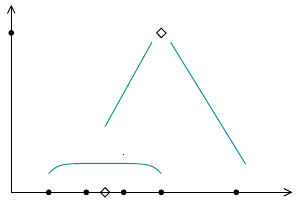_t}}
  \caption{An example of two voters in a completion of the partial profile $\P$ from the proof of \Cref{thm:pwdApproval}. The voter $v$ represents a job of length $k$, and approves the $k$ candidates closest to it among $c_1, \dots, c_{\dtilde}$. The voter $v'$ represents a job of length $k-1$ and approves $c^*$ and the $k-1$ candidates closest to it among $c_1, \dots, c_{\dtilde}$. \label{fig:pwdApproval}}
\end{figure}

\begin{theorem}
\label{thm:pwdApproval}
    Let $d \geq 2$ and $k \geq 3$ be fixed. $\pwpar{d}$ is NP-complete for $k$-approval.
\end{theorem}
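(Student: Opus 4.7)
The plan is the standard two-direction argument. Membership in NP is immediate: by the bisector-arrangement view of \Cref{res:boundedRankingsHighDim}, each ranking completion corresponds to a nonempty cell of a polyhedral arrangement which, via linear feasibility, contains a rational certificate of polynomial bit-size; verifying that a guessed spatial completion makes $c$ a winner under $k$-approval is then straightforward. For NP-hardness I reduce from \textsc{Exact Cover by 3-Sets} (X3C): given a universe $U$ with $|U|=3q$ and a family $\mathcal{S}$ of $3$-element subsets of $U$, decide whether some subfamily of size $q$ partitions $U$. I first prove hardness for $d=2$ and $k=3$ and then lift to larger $d$ and $k$.

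For the base case, I introduce the target candidate $c^\ast$, an element candidate $x_u$ for each $u\in U$, three idle dummies $d_1,d_2,d_3$, and a small pool of auxiliary candidates used only for geometric positioning. For each $S=\{u,u',u''\}\in\mathcal{S}$ I place the candidates $x_u,x_{u'},x_{u''}$ in a tight local cluster $K_S$, mirror-symmetric to the fixed cluster $\{d_1,d_2,d_3\}$ across a chosen axis, and design a set-voter $v_S$ with a thin axis-aligned rectangle $P_S$ straddling the midline between the two clusters. Because the distance from any point of $P_S$ to each candidate in $K_S$ is approximately equal (up to the cluster radius), and similarly for the idle cluster, the top-$3$ of $v_S$ depends only on which side of the midline it lies on: either the \emph{select} set $\{x_u,x_{u'},x_{u''}\}$ or the \emph{idle} set $\{d_1,d_2,d_3\}$. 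All other candidates are placed at much larger distance scales so they cannot enter $v_S$'s top-$3$. Calibration voters with singleton rectangles are then added to set baselines so that $c^\ast$ receives a score $B$, each $x_u$ receives $B-1$, each $d_i$ receives $D := B-(|\mathcal{S}|-q)$, and no other candidate can approach $B$ in any completion.

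For correctness, suppose $r$ set-voters select and the remaining $|\mathcal{S}|-r$ idle. Then $s(c^\ast)=B$, $s(x_u)=B-1+\#\{S\ni u : v_S\text{ selects}\}$, and $s(d_i)=D+(|\mathcal{S}|-r)$. For $c^\ast$ to be a co-winner we need every element covered at most once (from $s(x_u)\le B$) and $r\ge q$ (from $s(d_i)\le B$), and the double-counting inequality $3r\le|U|=3q$ then forces $r=q$ with each element covered exactly once, i.e.\ an exact cover of $\mathcal{S}$. Conversely, an exact cover yields a completion with $s(c^\ast)=s(x_u)=s(d_i)=B$, making $c^\ast$ a co-winner. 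To lift to arbitrary $d\ge 2$ I append zero coordinates to every candidate and every rectangle. To lift to arbitrary $k>3$ I add $k-3$ \emph{universal} dummies placed at a point lying inside every voter's rectangle and strictly closer to every voter than any other candidate, so that they occupy the top $k-3$ slots of every approval set and the remaining three slots replay the $k=3$ gadget (with $B$ shifted by $k-3$).

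The main obstacle is the global geometric bookkeeping: I must choose coordinates and rectangles such that, simultaneously for all $S\in\mathcal{S}$, the rectangle $P_S$ realizes exactly the select set and the idle set and nothing else -- in particular that no element candidate from another cluster $K_{S'}$ ever sneaks into $v_S$'s top-$3$, and that the symmetry of $K_S$ and $\{d_1,d_2,d_3\}$ is strong enough to eliminate ``mixed'' approval sets (swapping one $x$ for one $d$) from being feasible inside $P_S$. I plan to place cluster centers on geometrically separated scales (still of polynomial bit-length) so that inter-cluster distances dominate intra-cluster ones by a margin exceeding the extent of the rectangles, and to invoke \Cref{res:iterateRankingsHighDim} on each $P_S$ to certify its two-cell structure. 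Once this geometric preparation is in place, the equivalence with X3C is the clean counting argument above.
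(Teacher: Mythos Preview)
Your reduction has a structural inconsistency at its core. You write that for each $S=\{u,u',u''\}\in\mathcal{S}$ you ``place the candidates $x_u,x_{u'},x_{u''}$ in a tight local cluster $K_S$'' mirror-symmetric to the idle cluster. But an element $u$ typically belongs to several sets $S_1,S_2,\ldots$ in an X3C instance---this overlap is exactly where the hardness lives---and the single candidate $x_u$ has one fixed position in $\real^2$; it cannot simultaneously sit in $K_{S_1}$, $K_{S_2}$, etc., which by your description are distinct clusters, each paired with the idle cluster across its own midline. Your score identity $s(x_u)=B-1+\#\{S\ni u : v_S\text{ selects}\}$ therefore cannot be realised: if you use per-set copies of $x_u$, then selecting $S$ only raises the score of the local copy and no ``covered at most once'' constraint is enforced across sets; if you keep a single $x_u$, then the three element candidates of an arbitrary $S$ are scattered across the plane rather than clustered, and the claimed two-cell structure of $P_S$ (select vs.\ idle, with no mixed top-$3$) cannot be guaranteed. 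The ``geometrically separated scales'' you invoke for inter-cluster separation in fact work against you here: separation makes it impossible for a single axis-aligned rectangle $P_S$ to isolate precisely $\{x_u,x_{u'},x_{u''}\}$ as the top three when those points live in far-apart regions.

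A secondary gap is the lift to $k>3$: your universal dummies must lie inside every voter's rectangle and be strictly closer than every other candidate. But the set-voters' rectangles $P_S$ straddle different midlines and in general share no common point, and the calibration voters (singleton rectangles) would also have to contain that point.

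For comparison, the paper avoids these global placement problems altogether by reducing from single-machine scheduling with processing times in $\{k-1,k\}$ rather than from X3C. All non-target candidates sit on a line, each voter approves a \emph{consecutive} block of them, and the second dimension is used only to decide whether the block has length $k$ (voter at height $0$) or length $k-1$ plus the target $c^\ast$ (voter at height $3\tilde d$). This interval structure is precisely what one-dimensional spatial voting already produces, so no per-set cluster engineering is required and the correctness argument becomes a direct translation between approval blocks and scheduled intervals.
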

\begin{proof}
    We focus on $d=2$, since this is a special case of any {$d > 2$}. We show a reduction from non-preemptive scheduling with arrival times and deadlines (from \Cref{def:scheduling}) where we have a single machine and every processing time satisfies $p_j \in \set{k, k-1}$. Deciding whether a feasible schedule exists is strongly NP-complete for every $k>2$~\citep{DBLP:journals/corr/ElffersW14};
    then, we may assume that the maximal deadline $d_{\max}$ is polynomial in the number of jobs~$n$. We also assume that the minimal arrival time is 1.
    Let $\J_{k}$ and $\J_{k-1}$ be the sets of jobs with processing times $k$ and $k-1$, respectively. Each job $J_j$ {has} an arrival time $a_j$ and a deadline $d_j$. Let $\dtilde$ be the smallest multiple of $k$ that is greater or equal to $d_{\max}-1$.
    
    In the reduction, illustrated in \Cref{fig:pwdApproval}, the candidates are $C = \set{c^*, c_1, \dots, c_{\dtilde}}$. The positions are $\cvec{c}^* = \ip{0, 3\dtilde}$ for $c^*$ and $\cvec{c}_i = \ip{i+0.5, 0}$ for every $c_i$.
    For every partial voter that we define, the interval for the first position $(\ell_{j,1}, u_{j,1})$ always satisfies $0 \leq \ell_{j,1} \leq u_{j,1} \leq \dtilde$, and the position on the second issue is always one of two values, either 0 or $3\dtilde$.
    Observe that for a partial voter $v$ with $P = \ip{(\ell, u), 0}$ such that $0 \leq \ell \leq u \leq \dtilde$, the set of candidates that $v$ approves among the completions of $P$ is a sequence of consecutive candidates $(c_{i}, c_{i + 1}, \dots, c_{i + t})$. Moreover, in every completion, the candidates that $v$ approves are a substring of length $k$ of that sequence. If we have $P = \ip{(\ell, u), 3\dtilde}$ then $v$ necessarily approves $c^*$. Besides $c^*$, the set of candidates that $v$ approves among the completions of $P$ is again a sequence of consecutive candidates, and in each completion, $v$ approves a substring of length $k-1$ of that sequence. 
    We can therefore use an alternative definition for partial voters as in \Cref{sec:pw1TwoValued}. For each voter we specify the value on the second issue, and a sequence of candidates that the voter approves among the completions (not including $c^*$, since each voter either necessarily approves it or never approves it). Hence we denote $P = \ip{(c_i, \dots, c_{i+t}), y}$ where $(c_i, \dots, c_{i+t})$ is a sequence of consecutive candidates and $y \in \{ 0, 3\dtilde \}$. An approval completion is defined as before (for each voter, we specify the candidates that it approves).

    The partial profile $\P = \P^1 \circ \P^2 \circ \P^3$, consists of three parts that we describe next. In the first part, $\P^1$, for every job $J_j \in \J_{k}$ we introduce a voter $v_j$ with $P_j = \ip{(c_{a_j}, \dots, c_{d_j-1}), 0}$. In the second part, $\P^2$, for every job $J_j \in \J_{k-1}$ we introduce a voter $v_j$ with 
    $P_j = \ip{(c_{a_j}, \dots, c_{d_j-1}), 3\dtilde}$. The third part $\P^3$ consists of $(|\J_{k-1}|-1) \cdot \dtilde / k$ voters without uncertainty such that every candidate among $c_1, \dots, c_{\dtilde}$ is approved by exactly $|\J_{k-1}|-1$ voters. Specifically, $|\J_{k-1}|-1$ voters approve $c_1, \dots, c_k$, then $|\J_{k-1}|-1$ voters approve $c_{k+1}, \dots, c_{2k}$ and so on.

    We now analyze the possible scores in different completions. Let $\A = \A^1 \circ \A^2 \circ \A^3$ be an approval completion of $\P$. For every voter $v_j$ in $\A^1$, as we stated in the discussion on the alternative model of partial votes, $c^*$ is never approved by $v_j$, hence  $s(\A^1, c^*) = 0$. In contrast, every voter $v_j$ in $\A^2$ approves $c^*$, which implies $s(\A^2, c^*) = |\J_{k-1}|$. For the third part $\A^3$ we get $s(\A^3, c^*) = 0$ and $s(\A^3, c_i) = |\J_{k-1}|-1$ for every candidate $c_i$. Overall, the score of $c^*$ is $s(\A, c^*) = |\J_{k-1}|$.
    
    We show that $c^*$ is a possible winner of $\P$ if and only if there exists a feasible schedule. Let $\A$ be an approval completion of $\P$ where $c^*$ a winner. By our analysis of the scores in different completions, for every candidate $c_i$ we have $s(\A, c_i) \leq s(\A, c^*) = |\J_{k-1}|$. Since $s(\A^3, c_i) = |\J_{k-1}|-1$ we get $s(\A^1 \circ \A^2, c_i) \leq 1$, i.e., at most one voter from $\A^1 \circ \A^2$ approves $c_i$. We construct a schedule as follows. For every job $J_j \in \J_k$, let $(c_{i_j}, c_{i_j+1}, \dots, c_{i_j+k-1})$ be the $k$ candidates that $v_j$ approves in $A_j \in \A^1$. We schedule $J_j$ to start at time $i_j$. For every job $J_j \in \J_{k-1}$, let $(c_{i_j}, c_{i_j+1}, \dots, c_{i_j+k-2})$ be the $k-1$ candidates that $v_j$ approves in $A_j \in \A^2$ other than $c^*$. We again schedule $J_j$ to start at time $i_j$. 
    We show that this is a feasible schedule. By the definition of the partial votes, every job is processed between its arrival time and its deadline. Let $c_i \in C$. Observe that a job $J_j$ is scheduled to run at time $i$ if and only if the voter $v_j$ approves $c_i$ in $\A^1 \circ \A^2$. Since At most one voter from $\A^1 \circ \A^2$ approves $c_i$, we can deduce that at most one job is scheduled to run at time $i$, therefore we never schedule two jobs at the same time and the schedule is feasible.

    We now prove the other direction. Assume there exists a feasible schedule (recall we can assume the starting times are all integers). We construct a completion $\A = \A^1 \circ \A^2 \circ \A^3$ of $\P$. For every job $J \in \J_k$, let $i_j$ be the scheduled starting time, we define $A_j =  \{c_{i_j}, c_{i_j+1}, \dots, c_{i_j+k-1}\}$.  For every job $J \in \J_{k-1}$, let $i_j$ be the scheduled starting time, we define $A_j = \{c^*, c_{i_j}, c_{i_j+1}, \dots, c_{i_j+k-2}\}$. Note that by the definition of the arrival times,  deadlines, and the two types of jobs, each $A_j$ is a valid completion of $P_j$, and $\A$ is a completion of $\P$. We show that $c$ is a winner of $\A$.
    By our analysis regarding the scores in different completions, it is sufficient to show that $s(\A^1 \circ \A^2, c_i) \leq 1$ for every candidate $c_i$. For every candidate $c_i$ and voter $v_j$, $v_j$ approves $c_i$ in $\A^1 \circ \A^2$ if and only if the job $J_j$ is scheduled to run at time $i$. Since at most one job is scheduled to run at any time, we get that at most one voter of $\A^1 \circ \A^2$ approves $c_i$. This completes the proof.
\end{proof}

\section{Partial Spatial Approval Voting}\label{sec:ABC}

We now turn our attention to approval voting (AV), where each voter partitions the set of candidates into ``approved'' and ``unapproved'' candidates and the candidate with the highest number of approvals is selected \citep{BrFi83b}. In contrast to the positional scoring rule $k$-approval, the number of approved candidates is not fixed under AV. 
In the spatial framework, it is often assumed that each voter $v_j$ is associated with an \e{approval radius} $\rho_j \in \real$ and approves all candidates that are at a distance of at most $\rho_j$ from voter $v_j$'s position~\cite[e.g.,][]{GBSK21a}. 

More formally, a \e{partial spatial approval profile} consists of a partial spatial profile $\P$ and a radius $\rho_j \in \real$ for every voter $v_j$.
Each spatial completion $\T = (T_1, \dots, T_n)$
of $\P$ gives rise to an approval profile $\A_\T = (A_{T_1}, \dots, A_{T_n})$, where $A_{T_j} = \{c \in C: \norm{T_j-\cvec{c}_i}_2 \le \rho_j  \}$ is the approval set of voter $v_j$. We call $\A_\T$ an \textit{approval completion} of $T$ (keeping in mind that, in contrast to the approval completions in \Cref{sec:model}, the approval sets can have different sizes).
The definitions of possible and necessary winners and their respective decision problems remain unchanged.

First, for $d \leq 2$ dimensions, we show an upper bound on the number of different approval completions a single voter can have, and how to enumerate them in polynomial time.
This is crucial for our results for necessary-winner problems.
The regions in $\mathbb{R}^d$ where a voter has the same approval completion are bounded by (intersections of) $d$-dimensional \textit{spheres} (rather than hyperplanes as in \Cref{sec:necessaryWinners}).
We consider the connection between a partial spatial approval profile, for a single voter, and an arrangement of $d$-spheres (with the $d$-dimensional rectangle of the voter).
Instead of assuming the voter has a radius $\rho$ around its position and approves all candidates inside this sphere, we can equivalently assume that we have a sphere with radius $\rho$ around each \e{candidate's} position and the voter approves all candidates whose sphere contains the voter's position. See \Cref{fig:spatialApproval} for a graphical depiction in two dimensions.

\begin{figure}
\centering
\scalebox{1}{\input{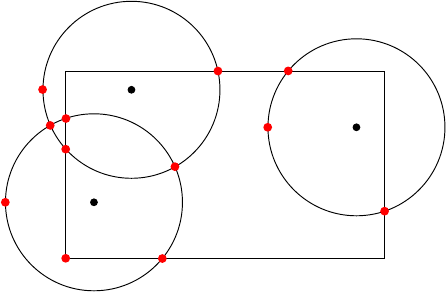_t}}
\caption{\label{fig:spatialApproval} Example of an arrangement for $d=2$, candidate set $\set{c_1,c_2,c_3}$, and a voter $v$.
The possible approval sets of $v$ are $\emptyset, \{c_1\}, \{c_2\}, \{c_3\},$ and $\{c_1, c_2\}$, depending on the actual position of $v$ inside the rectangle. The red points are all event points from the sweep line algorithm in \Cref{res:iterateApprCompletions}.}
\end{figure}

Now, recall that the voter's possible positions are described by a $d$-dimensional rectangle. Thus, if this rectangle intersects a sphere associated with a candidate $c$, then there is a completion where the voter approves $c$. Moreover, if there is a point inside the rectangle that is contained in the spheres of, say, the candidates $c_1$, $c_2$, and $c_3$, and in none of the other spheres, then there is a completion of the profile where the voter's approval set is exactly $\set{c_1,c_2,c_3}$: the voter's approval set is $\set{c_1,c_2,c_3}$ in exactly the completions where the position of the voter is in the intersection of the rectangle of the voter and the three spheres around $c_1$, $c_2$ and $c_3$. Thus, the possible completions of the voter's approval profile correspond to all (maximal) regions of the arrangement of $m$ $d$-spheres with radius $\rho$ (each around the position of one of the candidates), and the $d$-dimensional rectangle described by $\ip{[\ell_{j, 1}, u_{j, 1}], \dots, [\ell_{j, d}, u_{j, d}]}$ (where we only consider those regions that lie inside the rectangle).

\begin{theorem}
\label{res:iterateApprCompletions}
Fix $d\le 2$ and let $C$ be a set of $d$-dimensional candidates and $P$ 
a single $d$-dimensional partial vote with radius $\rho \in \real_{\geq 0}$. Then, the set of approval completions of $P$ can be enumerated in polynomial time.
\end{theorem}
\begin{proof}
We provide proofs for $d=1$ and $d=2$ separately but use the same idea.
To enumerate the regions of the arrangement as described above, we use a standard tool in computational geometry called a \e{sweep line} algorithm (e.g., see \citealp{halperin2017arrangements}). The idea is to go over the arrangement from left to right, focusing on \e{event points} where something in the arrangement changes.

For $d=1$, the spheres and the rectangle all degenerate to intervals in $\real$, we denote the intervals corresponding to the candidates by \e{candidate-intervals}. We construct a set $X$ of binary arrays of length $m$, which represents all approval completions of $P$. We start with $X = \emptyset$ and ``sweep'' the interval $[\ell,u]$ of the partial vote from left to right and stop at the following event points (note that by only considering the interval $[\ell,u]$ we will not get all regions but only those that intersect the interval of the voter\,---\,which are actually the only ones we are interested in).
An event point here is each left and right border of a candidate-interval as well as the points $\ell$ and $u$. Note that since all candidates' positions are unique but all candidates use the same radius $\rho$, no two left or two right endpoints of the candidate-intervals will lie on top of each other.
If we reach such an event point $p$ we measure the distance $d_i$ from this point to each candidate $c_i \in C$.

For each event point $p$, we costruct an array $x \in \set{0,1,\diamond}^m$, where $\diamond$ is a placeholder that we will replace by 0 or 1 in a later step. For each candidate $c_i$,
\[ x_i = \begin{cases}
1 &\text{if $d_i < \rho$}\\
0 &\text{if $d_i > \rho$}\\
\diamond &\text{otherwise ($d_i = \rho$)}
\end{cases}\]
Observe that if $d_i < \rho$ then $p$ is inside $c_i$'s candidate interval; if $d_i > \rho$ then $p$ lies outside $c_i$'s candidate interval; and if $d_i = \rho$ then $p$ lies on the boundary of $c_i$'s candidate interval.

If there is only one $\diamond$ in $x$ we are at an event point where one of intervals starts or ends. In this case we add two arrays to $X$, one where $\diamond$ is exchanged for a 0 and one where it is exchanged for a 1 corresponding to the two regions that touch at this point. It might be the case that one of the arrays is already in $X$, in which case we simply do not include it a second time. If there is more than one $\diamond$ in $x$ then $p$ is an event point where one candidate-interval ends and exactly one other starts, hence we will have two $\diamond$ in $x$, let $i, j$ be their indices. (More than two $\diamond$ is not possible since we assume that no two candidates have the exact same position.) We then add three new arrays to the list of regions: one where $x_i = 0, x_j = 1$, one where $x_i = 1, x_j = 0$, and one where $x_i = x_j = 1$. There is no region that would correspond to the case where we set $x_i = x_j = 0$ as the intervals (which are assumed to be closed) touch at $p$. Once we construct $X$ with the arrays of all event points (i.e., we reach the last event point at $u$) the algorithm stops.

Since we have at most $2m+2$ event points and at each point we perform $m$ distance checks the algorithms runs in time $\O(m^2)$.
To see that the algorithm introduces an array for each region inside $[\ell,u]$ of the arrangement in $\real$, fix any such region. The region has a leftmost point, i.e.\ a point with the lowest coordinate. This is either $\ell$ or the beginning of a candidate-interval. In either case, the leftmost point was an event point and we introduce all regions that meet at an event point.

We now consider the case that $d=2$. The overall approach is exactly the same as for $d=1$, we only have to take care of the definitions of event points. Once we establish these we consider them from left to right (i.e.\ in order from lowest to highest $x$-coordinate, breaking ties by their $y$-coordinates) and construct a set $X$ of binary vectors of length $m+1$. In each vector, the $i$'th entry is $1$ if the region lies inside the sphere of $c_i$, and $0$ otherwise. The $(m+1)$st entry is $1$ if the region is inside the rectangle.
The event points this time are all $m$ leftmost points of the circles corresponding to the candidates and the one leftmost (i.e.\ the  bottom left) point of the rectangle corresponding to the voter. Additionally, every intersection point of the circles among themselves and with the rectangle constitutes an event point. See \Cref{fig:spatialApproval} for an example. Since two circles intersect in at most 2 points and a rectangle and a circle intersect in at most 8 points the number of event points is at most $2\binom{m}{2} + 9m + 1 = \mathcal{O}(m^2)$, and these points can be computed in polynomial time.

Upon reaching an event point $p$, we again measure for each candidate the distance from $p$ to this candidate and fill $x \in \set{0,1,\diamond}^m$ as described above. For each candidate $c_i$ we fill $x_i$ as in the algorithm for $d=1$. For $x_{m+1}$ we check via whether the point lies inside the voter's rectangle or not and assign 0,1, or $\diamond$. In the following we will denote the $m$ spheres and the rectangle together simply by \e{objects} as they will behave mostly identically for the remainder of the algorithm.

For each pair of objects with a $\diamond$ we check whether they touch at $p$ or truly intersect each other. This can be done as a preprocessing step for every pair of objects while computing the intersection points: two circles only touch each other if and only if they have only one intersection point (since they have the same radius); for a circle and the rectangle this is only slightly more involved and can be done in the quadratic timeframe of finding all intersection points.
If the objects are intersecting, then we introduce all regions coming from all combinations of 0 and 1 for the different $\diamond$'s. If they are only touching, we do not introduce an array where we set both $\diamond$'s to 0. In the case that there are more than two objects intersecting and/or touching at $p$, we consider them together. Again, if an array is already in $X$, we do not need to add it again.

The algorithm stops after handling all event points. Since at each of the $\O{O}(m^2)$ many event points we essentially only have to perform $m$ distance operations (and 4 linear equation checks for the rectangle) the algorithm runs in polynomial time.
We now argue again why each region is introduced.
First, note that when considering an event point $p$ we obtain every region that $p$ touches by the above procedure of using the placeholder $\diamond$. Now consider any region. Its leftmost point is either an intersection point of two or more objects or the leftmost point of an object. Thus, the leftmost point of the region is an event point and the region gets introduced. 
\end{proof}

We can use the above result to find the necessary winners for AV if $d \leq 2$, similarly to the proof of \Cref{thm:nwd}. Note that we run the sweep line algorithm described above once for each voter $v_j$, using that voters radius $\rho_j$.\footnote{Note that every two $d$-spheres intersect in a $(d-1)$-sphere. For $d \geq 3$, it is thus not clear how to define the corresponding event points for a sweep plane algorithm in $d$ dimensions.} 

\begin{theorem}
\label{thm:nwdApp}
Let $d \leq 2$. $\nwpar{d}$ is solvable in polynomial time for AV.
\end{theorem}

On the other hand, we show that the possible-winner problem is hard, even for $d=1$. This is the first negative result we have for partial spatial voting (using approval or ranked preferences) in the one-dimensional case.

\begin{theorem}
\label{res:pw1App}
$\pwpar{d}$ is NP-complete for any $d \geq 1$ for AV.
\end{theorem}

\begin{proof}
    We show a reduction from non-preemptive scheduling with arrival times and deadlines (see \Cref{def:scheduling}), where we have a single machine. Deciding whether a feasible schedule exists is strongly NP-complete 
    \citep{GJ79}.
    As in the proof of \Cref{thm:pwdApproval}, we may assume that the maximum deadline $d_{\max}$ is polynomial in the input size.
    Given $n$ jobs jobs $\J = \set{J_1, \dots, J_n}$,we construct an instance of $\pwpar{1}$ for apporval voting with $2 + d_{\max}$ candidates and $n+1$ voters such that the given scheduling instance is a \e{yes}-instance if and only if the $\pwpar{1}$ instance is a yes instance. Since the real line can be embedded in any $\mathbb{R}^d$, $d\geq 2$, the result immediately translates to higher dimensions.

    The candidates are $C = \set{c^*, c_0, c_1, \dots, c_{d_{\max}}}$, each $c_i$ is positioned at $\cvec{c_i} = i$ on the real line and $\cvec{c^*} = -1$. There is one voter $v^*$ who without uncertainty only approves of $c^*$. This is the candidate for which we will later try to solve the possible winner question. For each job $J_j \in \J$ we have a voter $v_j$ that has radius $\rho_j = \frac{1}{2} \cdot p_j$ and is positioned such that $v_j$ can approve any $p_j$-length interval of the candidates $c_{a_j}, \ldots, c_{d_j}$. In particular, for some $\varepsilon > 0$ we set 
    \begin{align*}
    \ell_j = a_j + \frac{p_j}{2} - \varepsilon,\, \text{ and } u_j = d_j - \frac{p_j}{2} + \varepsilon \,.
    \end{align*}
    Note that $v_j$ can also approve intervals of length $p_j +1$ of $c_{a_j}, \ldots, c_{d_j}$ according to this construction. Nevertheless, we can assume $v_j$ only ever approves intervals of length $p_j$ because if $c^*$ is a possible winner in any completion where $v_j$ approves of $p_j + 1$ candidates, $c^*$ is also a winner in the completion where $v_j$ approves of a $p_j$-sized sub-interval of those candidates instead. Thus, we can assume that $v_j$ only approves of some $p_j$-sized interval of $c_{a_j}, \ldots, c_{d_j}$.

    We now argue why $c^*$ is a possible winner if and only if the scheduling instance is a \e{yes}-instance.
    First note that $c^*$ has a score of 1 in every completion and thus is a possible winner if and only if no other candidate has a score greater or equal to 2. This is exactly the case if no two voters approve the same candidate, which in turn is equivalent to all tasks being scheduled (on a single machine) without overlap. This is the condition for the sequencing instance to be a \e{yes}-instance.
    Completeness follows by noting that a single completion in which the candidate of interest is a winner functions as a witness for a \e{yes}-instance.
\end{proof}

\subsection{Approval-based committee voting.}
We now briefly discuss the impact of these results on approval-based \e{multi-winner} elections.
In approval-based committee (ABC) voting \citep{LaSk22a}, a \e{committee} (i.e., subset of candidates) of a fixed size $k$ needs to be selected based on approval ballots.
Adapting the notions of \citet{DBLP:conf/aaai/ImberIBK22} to our setting, a set of candidates $W$ is a \e{possible committee} if there is a completion of the partial spatial approval profile where $W$ is selected, and a \e{necessary committee} if $W$ wins in every completion. We parametrize both problems by the dimension $d$ and the committee size $k$.

The hardness of possible winners of the single-winner AV rule (\Cref{res:pw1App}) can be extended to ABC: we can embed any single-winner instance into an ABC instance by simply adding $k-1$ dummy candidates and appropriately many voters (without uncertainty) who ensure that the dummy candidates are in every winning committee. Then, solving the possible committee problem in this ABC instance would also solve the possible-winner problem for the single-winner instance. Note that this extends to any ABC voting rule that is equivalent to AV for $k=1$.
Conversely, using \Cref{res:iterateApprCompletions} and techniques used by \citet{DBLP:conf/aaai/ImberIBK22} to find a necessary committee under partial orders, one can show that the necessary committee problem in our setting is tractable for all \textit{ABC scoring rules} (a large class of rules including all Thiele rules) for $d \leq 2$.

\section{Conclusions}\label{sec:conclusions}
We introduced the framework of partial spatial voting, where candidates and voters are positioned in a geometrical space, but voters can have intervals of possible values in each dimension/issue. For positional scoring rules, we recovered the tractable cases of necessary and possible winners in the model of partial orders, for every fixed number of issues. In particular, we showed that the possible winners can be found in polynomial time for the plurality and veto rules, and that the necessary winners can be found in polynomial time for every positional scoring rule. We identified cases where the possible-winner problem is hard for partial orders but not for partial spatial voting. Specifically, this holds for 
two-valued rules other than plurality and veto, such as $k$-approval and $k$-veto for $k>1$. We showed that the possible-winner problem may become intractable when the number of issues increases to a higher fixed number. 
For partial spatial approval ballots, we showed that the possible-winner problem under AV is intractable for all $d \geq 1$, and we gave an efficient algorithm for the necessary-winner problem under AV for $d \leq 2$.

It is left for future work to complete the picture of complexity for all positional scoring rules (see \Cref{tab:complexity}). Especially interesting is the case of Borda for $d=1$. It is also left to study the complexity of non-positional voting rules, as done in the case of the necessary and possible winners for model of partial orders~\cite{DBLP:journals/jair/XiaC11}.
It would also be interesting to study the implications of voter uncertainty on the strategic considerations of candidates. For instance, in a model where probabilistic information about voters' locations is available, candidates may want to relocate in order to increase their winning probability.

\section*{Acknowledgments}
The work of Markus Brill and Jonas Israel was supported by the Deutsche Forschungsgemeinschaft under grant BR 4744/2-1. The work of Aviram Imber and Benny Kimelfeld was supported by the Deutsche Forschungsgemeinschaft under project 412400621.

\bibliographystyle{plainnat}
\bibliography{References}

\end{document}